\documentclass[journal,onecolumn]{IEEEtran}

\usepackage[margin=1in]{geometry}
\usepackage[T1]{fontenc}
\usepackage[utf8]{inputenc}
\usepackage{amsmath,amssymb,amsfonts}
\usepackage{mathtools}
\usepackage{bm}
\usepackage{amsthm}
\usepackage{setspace}
\usepackage[numbers,sort&compress]{natbib}
\usepackage{algorithm}
\usepackage{algorithmic}
\usepackage{graphicx}
\usepackage{tabularx}
\usepackage{booktabs}
\usepackage{enumitem}
\usepackage{array}
\usepackage{float} 
\usepackage[colorlinks=true,linkcolor=black,citecolor=black,urlcolor=black]{hyperref}
\newtheorem{theorem}{Theorem}
\newtheorem{lemma}{Lemma}

\newcommand{\simplex}{\Delta^{K-1}}
\newtheorem{proposition}{Proposition}
\begin{document}
\title{SCENE OTA-FD: Self-Centering Noncoherent Estimator for Over-the-Air Federated Distillation}

 \author{Hao~Chen,~\IEEEmembership{Member,~IEEE,}
        and Zavareh~Bozorgasl,~\IEEEmembership{Member,~IEEE}%
\thanks{H. Chen (haochen@boisestate.edu) and Z. Bozorgasl (zavarehbozorgasl@u.boisestate.edu) are with the Department of Electrical and Computer Engineering,
Boise State University, Boise, ID 83712 USA.}%
}

\doublespacing
\maketitle

\begin{abstract}We introduce a pilot-free, phase-invariant noncoherent aggregation primitive for over-the-air federated distillation (OTA--FD)
called \emph{SCENE}---Self-CEntering Noncoherent Estimator.
Devices map soft-label vectors to transmit energies with constant
per-round power and constant-envelope waveforms (PAPR $\approx 1$).
The parameter server (PS) forms a self-centering estimate
that is unbiased in expectation and whose variance decays as $O(1/(SM))$ with the number of antennas $M$ and repetitions $S$.
We further provide a $\beta$-cancelling ratio variant that
remains pilot-free on the uplink, a convergence bound that mirrors coherent OTA--FD analyses, and a concise overhead comparison
that clarifies when SCENE outperforms coherent designs under short coherence.

We target regimes with short coherence or hardware-simplicity constraints, where avoiding per-round CSI is a first-order design objective. SCENE instantiates this point in the design space: it trades a slightly larger constant in the $O(1/(SM))$ aggregation variance for \emph{zero uplink pilots}, unbiasedness by design, and constant-envelope transmission. When pilot overhead is non-negligible, the effective mean-squared error can favor SCENE despite its purely noncoherent processing.
\end{abstract}

\section{Introduction}
Over-the-air (OTA) learning aggregates distributed statistics \emph{in the air}. In federated distillation (FD), exchanging soft
labels instead of model parameters slashes communication costs.
Most OTA--FD designs are \emph{coherent}, requiring per-round
CSI and phase alignment with nontrivial pilot/feedback overhead and sensitivity to CFO/phase noise. This represents one point in a broader design space. We analyze the opposite point: a fully noncoherent approach that discards instantaneous CSI altogether. 
The geometry of soft labels makes \emph{noncoherent energy aggregation} a natural fit: energies are nonnegative and additive; the target lies on the probability simplex.
Our thesis is not that noncoherent is universally superior, but that under short coherence or strict device simplicity, the cost of even imperfect phase acquisition outweighs its benefits. We therefore prioritize an unbiased estimator and hardware-friendly constant-envelope signaling, accepting a well-quantified variance as the trade-off.

This paper develops \emph{SCENE} (Self-CEntering Noncoherent Estimator), a pilot-free OTA--FD estimator built on mean-centering
of received energies.
SCENE eliminates noise-energy bias and yields an unbiased estimate of the weighted soft-label average under
coarse pathloss inversion, with variance $O(1/(SM))$.
Lightweight renormalization makes the per-round output a valid soft label.
Constant-envelope signaling enables efficient non-linear PAs and materially lowers device energy per round versus coherent
high-PAPR schemes.
\paragraph{Contributions.}
\begin{itemize}

    \item A \textbf{truly noncoherent} OTA mapping from soft labels to energy.
    \item SCENE: a simple, unbiased, pilot-free noncoherent estimator for soft-label aggregation with variance $O(1/(SM))$.
    \item A \textbf{constant-power transmission scheme} by design, making it highly suitable for power-efficient amplifiers and simpler hardware compared to high-PAPR coherent schemes.
    \item An analysis framework for comparing the MSE against coherent baselines to identify the regimes (e.g., high mobility) where our scheme is superior.
\end{itemize}

\section{Related Work}
\subsection{Soft-Label Federated Distillation}

Soft-label--based federated distillation (FD) exchanges predictions/logits instead of parameters,
typically evaluated on a public (unlabeled) dataset, enabling \emph{model heterogeneity} and reducing
communication cost compared to parameter sharing.
Foundational frameworks include \textbf{FedMD} \cite{li2019fedmd},
which introduced public-data logit sharing for heterogeneous models, and \textbf{FedDF} \cite{lin2020feddf},
which performs server-side \emph{ensemble distillation} on unlabeled or generated data to fuse heterogeneous clients.
\textbf{DS-FL} \cite{itahara2020dsfl} formalized public-data distillation with the \emph{entropy reduction averaging (ERA)}
rule that deliberately sharpens aggregated logits for faster convergence under non-IID data.
A parallel line explores server capacity and split-style training via \textbf{FedGKT} \cite{he2020fedgkt},
which alternates client-side small models with a large server model using bidirectional KD;
this highlights
that distillation can decouple edge compute from server capacity.
When public data are unavailable, \textbf{FedGen}
\cite{zhu2021fedgen} and related data-free KD approaches learn a generator at the server to synthesize queries for
distillation, trading sample fidelity for privacy portability.
Robust fusion under heterogeneity has been studied by
\textbf{FedBE} \cite{chen2020fedbe}, which casts aggregation as Bayesian model ensemble followed by distillation.
Given that soft labels can still be large when $K$ or the public batch is large, recent work targets \emph{communication
shaping}.
\textbf{CFD} \cite{sattler2020cfd,sattler2022cfdtnse} quantizes soft labels and applies delta coding with
active query selection, showing orders-of-magnitude savings versus vanilla FD.
Very recent extensions like
\textbf{SCARLET} \cite{azuma2025scarlet} add \emph{soft-label caching} across rounds combined with an enhanced ERA
rule to avoid redundant retransmissions.
These techniques are complementary to our transport-layer contribution:
SCENE can aggregate soft labels \emph{over the air} under any of the above FD protocols.
Surveys provide taxonomies and practical guidance. We follow the organization of \cite{li2024fdsurvey} (FD survey)
and \cite{mora2022knowledge} (practical guide to KD in FL), as well as broader KD-in-FL surveys
\cite{wu2023kd_fel_survey,salman2025kd_survey}.
For completeness, we note emerging security analyses showing that
\emph{logit poisoning} is a meaningful threat model in FD \cite{pcfdla2024}, which our self-centering estimator
mitigates partly via its explicit renormalization and projection options.

\subsection{Noncoherent OTA for AirFL (gradients) vs.\ SCENE for FD (soft labels)}
\textbf{What exists (gradients/parameters).}
There is a growing body of \emph{noncoherent} over-the-air learning for \emph{FL gradients/parameters}:
(i) CSI-free noncoherent AirFL (``NCAirFL'') with an unbiased energy-based detector that matches FedAvg convergence up to constants~\cite{wen2024ncairfl};
(ii) noncoherent FSK/PPM majority-vote (signSGD) schemes with theory and SDR demonstrations~\cite{sahin2021fskmv,sahin2022noncohmv,sahin2022globedemo};
and (iii) formal comparisons of coherent vs.\ noncoherent OAC that quantify the mean-squared-error (MSE) crossover as a function of pilots and coherence~\cite{lee2024tvt_coh_noncoh}.
See also tutorials/surveys for broader context on AirFL and AirComp~\cite{zhu2023airfl_tutorial,xiao2024otafl_survey}.

\medskip\noindent\textbf{How SCENE differs (distillation/soft labels).}
SCENE targets \emph{federated distillation} (FD), aggregating \emph{class-probability vectors} $q_i\!\in\!\Delta^{K-1}$
rather than real-valued gradients or signs.
This changes the estimator design and bias handling:
(i) the aggregand lies on the simplex, so we can exploit sum-to-one constraints and \emph{self-centering} to cancel unknown offsets;
(ii) the nonnegativity of energies makes \emph{energy aggregation} well matched to the task; and
(iii) modest side-information (long-term pathloss) suffices to debias large-scale effects.

\medskip\noindent\textbf{OTA--FD (coherent designs).}
Recent OTA--FD works co-design transceivers and learning under \emph{coherent} combining (pilots/CSI per round),
including VTC'24 designs and follow-ons that add differential privacy or optimize transceivers~\cite{hu2024vtc_otafd,hu2025dp_otafd,otafd2025transceiver}.
SCENE is complementary: it removes pilots entirely on the uplink and trades a small $O(1/(SM))$ variance for zero pilot cost, which is favorable when coherence is short.

\begin{table}[t]
\centering
\caption{OTA aggregation for learning: position of SCENE vs.\ closest lines.}

\makebox[\textwidth][c]{%
\resizebox{\linewidth}{!}{%
\begin{tabular}{@{}lllll@{}}
\toprule
\textbf{Work} & \textbf{OTA Type} & \textbf{Pilots/CSI} & \textbf{Aggregand} & \textbf{Estimator}\\
\midrule
Hu et al.~\cite{hu2024vtc_otafd} & Coherent OTA--FD & Per-round CSI & logits & beamforming + scaling \\
Hu et al.~\cite{hu2025dp_otafd} & Coherent DP OTA--FD & Per-round CSI & logits & DP + co-design \\
Transceiver opt.~\cite{otafd2025transceiver} & Coherent OTA--FD & Per-round CSI & logits & MMSE transceiver \\
NCAirFL~\cite{wen2024ncairfl} & Noncoh. AirFL & No pilots & gradients & unbiased noncoh. detector \\
FSK/PPM MV~\cite{sahin2021fskmv,sahin2022noncohmv} & Noncoh. AirFL & No pilots & signs & noncoh. energy \\
\midrule
\textbf{SCENE (Proposed)} & \textbf{Noncoh. OTA--FD} & \textbf{No pilots} & \textbf{probabilities/soft labels} & \textbf{self-centering / ratio} \\
\bottomrule
\end{tabular}%
}}%

\end{table}

\section{Notation and Assumptions}
We consider $K$ classes and devices $i=1,\dots,N$.
Each device holds $q_i\in\simplex=\{q\in\mathbb{R}^K: q\ge 0,~\bm{1}^\top q=1\}$.
Let nonnegative weights satisfy \textbf{$\sum_i\omega_i=1$}. The desired global target is $\bar q=\sum_i\omega_i q_i$. The uploading of updates from edge devices to
the server is through a broadband multi-access channel. We assume that the devices know the large-scale path-loss, $\beta_i>0$, by a power control mechanism like  physical uplink control channel (PUCCH) in Fifth Generation (5G) New Radio (NR) \cite{dahlman20205g}. During updates, devices transmit concurrently and are assumed to be symbol-synchronized using a synchronization channel (e.g., “timing
advance” in LTE systems \cite{3gpp_timing}) \footnote{The robustness of synchronization scales with the bandwidth allocated to the sync channel. With a modern phase-locked loop, the synchronization error can be as small as \(0.1\,B_s^{-1}\), where \(B_s\) denotes the synchronization bandwidth. For LTE, a common choice is \(B_s = 1\,\mathrm{MHz}\), yielding a timing offset on the order of \(0.1\,\mu\mathrm{s}\)
\cite{zhu2019broadband} .}. 
The PS uses $M$ antennas and repeats each OTA burst $S$ times.
We introduce a global energy
scale $\rho>0$ broadcast by the PS each round.

\section{Noncoherent Energy Aggregation and SCENE}
\subsection{Measurement Model}
\paragraph{Received signal and noncoherent aggregation.}
Each device maps class $c$ to energy
\begin{equation}
E_{i,c}\;=\;\eta_i\,q_{i,c},\qquad \eta_i=\frac{\rho\,\omega_i}{\beta_i},\qquad \sum_{c=1}^K E_{i,c}=\eta_i,
\end{equation}
and transmits over $K$ orthogonal REs, repeated $S$ times.
With Rayleigh fading $h_{i,s,m}=\sqrt{\beta_i}g_{i,s,m}$, $g_{i,s,m}\sim\mathcal{CN}(0,1)$. Path-loss follows a log-distance model, which can be expressed by
$\beta_i(d_i)\propto d_i^{-\alpha}
$ \cite{dong2024modeling,zhu2019broadband, aygun2024over}, where $\alpha$ denotes the path-loss exponent and $d_i$ is the distance of the $i$-th edge device from the server.
During repetition $s\in\{1,\dots,S\}$ on receive antenna $m\in\{1,\dots,M\}$ for class $c$,
the complex baseband sample at the PS is
\begin{equation}
y_{c,s,m}
=\sum_{i} h_{i,s,m}\,\sqrt{E_{i,c}}\,e^{j\phi_{i,c,s,m}} \;+\; n_{c,s,m},
\label{eq:rx-sample}
\end{equation}
where $E_{i,c}\!\ge\!0$ is the energy emitted by client $i$ on class $c$, $h_{i,s,m}$ is the small-scale
channel from client $i$ to antenna $m$ in repetition $s$ with $\mathbb{E}[|h_{i,s,m}|^{2}]=\beta_i$,
$\phi_{i,c,s,m}$ is an unknown random phase $Uniform[0,2\pi)$, and $n_{c,s,m}\sim\mathcal{CN}(0,\sigma_N^2)$ is AWGN independent of all other terms. Hence the \emph{per-}$(s,m,c)$ \textbf{average SNR} (signal power over noise power) is
\[
\mathrm{SNR}^{(s,m)}_{c}
\;=\;
\frac{\sum_{i} \beta_i E_{i,c}}{\sigma_N^{2}}
\;=\;
\frac{\rho\,\bar q_{c}}{\sigma_N^{2}}\, .
\]

The PS performs \emph{noncoherent} energy aggregation by summing magnitudes squared across the $SM$ observations:
\begin{equation}
Y_c \;=\; \sum_{s=1}^{S}\sum_{m=1}^{M} \bigl| y_{c,s,m}\bigr|^{2}
\;=\;
\sum_{s=1}^{S}\sum_{m=1}^{M} \left| \sum_i h_{i,s,m}\sqrt{E_{i,c}}\,e^{j\phi_{i,s,m}} \right|^2
\;,
\label{eq:energy-agg}
\end{equation}

Taking expectations (cross terms vanish in expectation due to independence and random phases),
\begin{equation}
\mathbb{E}[Y_c]
= \sum_{s=1}^{S}\sum_{m=1}^{M} \sum_i \mathbb{E}\!\left[ |h_{i,s,m}|^{2}\right] E_{i,c}
\;+\; SM\,\sigma_N^2
= SM \sum_i \beta_i E_{i,c} + SM \sigma_N^2,
\end{equation}
Define $m_c=\sum_i \beta_i E_{i,c}=\rho\sum_i \omega_i q_{i,c}=\rho\,\bar q_c$ and $\nu_N$ the noise-energy term.

\section{Determining $\rho$ for SCENE}\label{sec:scene-rho}
Per repetition total transmit energy across the $K$ orthogonal REs is
\begin{equation}
\sum_{c=1}^{K} E_{i,c}
\;=\;
\eta_i
\;=\;
\frac{\rho\,\omega_i}{\beta_i}.
\label{eq:per-trial-sum}
\end{equation}
This identity holds whether the $K$ REs are used in parallel or serialized in time; the mapping is energy per RE per trial.

\paragraph{Feasibility with a per-trial cap.}
If client $i$ has a per-trial (or per-round, per repetition) power/energy cap $P_i$, feasibility is
\begin{equation}
\eta_i \;\le\; P_i
\qquad\Longleftrightarrow\qquad
\frac{\rho\,\omega_i}{\beta_i} \;\le\; P_i
\qquad\Longleftrightarrow\qquad
\rho \;\le\; \frac{\beta_i P_i}{\omega_i}.
\label{eq:feasibility}
\end{equation}
Taking the weakest user over the active set $\mathcal S^t$ gives the \emph{correct min--$\rho$ rule for SCENE}:
\begin{equation}
\boxed{\;
\rho^{*,t}
\;=\;
\min_{i\in\mathcal S^t} \frac{\beta_i\,P_i}{\omega_i}
\;}
\quad\text{(SCENE / soft-label energies).}
\label{eq:rho-star}
\end{equation}

\subsection{Min-$\rho$ protocol (per round $t$)}
\label{subsec:scene-minrho}
\begin{enumerate}
  \item \textbf{Client-side estimation.}
  Each $i\in\mathcal S^t$ forms a local estimate $\rho_i^t$.

  \item \textbf{Uplink report (control).}  
  Client $i$ attaches the scalar $\rho_i^t$ to its participation ACK (one number per client).

  \item \textbf{Server aggregation and broadcast.}
  The PS computes the conservative common scale
  \begin{equation}
    \rho_{\min}^t \;\triangleq\; \min_{i\in\mathcal S^t}\rho_i^t,
    \label{eq:scene-rho-min}
  \end{equation}
  and broadcasts $\rho_{\min}^t$ to all $i\in\mathcal S^t$.

  \item \textbf{Usage in SCENE.}
  Wherever $\rho$ appears in SCENE’s estimators/normalizers for round $t$, substitute
  \(\rho \leftarrow \rho_{\min}^t\).
\end{enumerate}

\paragraph{Why the minimum?}
Using $\rho_{\min}^t$ guarantees that normalizations are safe for the weakest-SNR participant.
It yields a single broadcast scalar and avoids per-client tuning.

\subsection{SCENE: Self-CEntering Noncoherent Estimator}
Let $\bar Y=\frac{1}{K}\sum_{j=1}^K Y_j$ and set
\begin{equation}\label{eq:self-centering-estimator}
r_c \;=\; a\,(Y_c-\bar Y) \;+\; \frac{1}{K},\qquad
a=\frac{1}{SM\,\rho}.
\end{equation}

\paragraph{Remark 1 (What ``pilot-free'' means).}
SCENE is pilot-free \emph{on the uplink and per round}: it requires no instantaneous CSI or phase alignment. Devices assume a slow-varying pathloss estimate $\hat\beta_i$ (via infrequent downlink RSRP/RSSI with reciprocity), updated on a slow timescale $T_{\mathrm{slow}}$ spanning many coherence intervals. This amortized overhead is decoupled from the per-round latency of coherent designs. The effect of mismatch $\beta_i \neq \hat\beta_i$ is quantified in~\S\ref{sec:bias-variance}; ratio-normalization (\S\ref{subsec:ratio-normalization}) further removes common gain/offset factors.

\begin{theorem}[Unbiasedness and variance]\label{thm:unbiased-variance}
Assume: (A1) small-scale fading $h_{i,m,s}\sim \mathcal{CN}(0,\beta_i)$ i.i.d.\ across devices $i$, antennas $m=1,\dots,M$, and repetitions $s=1,\dots,S$; 
(A2) AWGN with per-RE noise-energy mean $\sigma_N^2$ and finite variance $v_N$; 
(A3) devices use $\eta_i=\rho\,\omega_i/\beta_i$ with $\sum_i \omega_i=1$; and 
(A4) orthogonal REs across classes $c=1,\dots,K$.
Let $r_c$ be defined by~\eqref{eq:self-centering-estimator}. Then
\[
\mathbb{E}[r_c] \;=\; \bar q_c \;\triangleq\; \sum_i \omega_i q_{i,c}.
\]
Moreover, with $E_{i,c}=\eta_i q_{i,c}$ and $V_{\rm sig}(c)\triangleq \sum_i \beta_i^2 E_{i,c}^2$, one has the variance bound
\begin{equation}\label{eq:variance-bound}
\mathrm{Var}(r_c) \;\le\; \frac{2}{SM\,\rho^2}\Big(V_{\rm sig}(c)+v_N\Big)
\;=\; \frac{2}{SM}\Big(\sum_i \omega_i^2 q_{i,c}^2 + \tfrac{v_N}{\rho^2}\Big).
\end{equation}
Hence $\mathrm{Var}(r_c)=\Theta\!\big(\frac{1}{SM}\big)$ with constants depending on $(\omega,\bar q,\rho,\sigma_N^2)$.

\noindent In our setting with independent REs per class and energy detection,
$\mathrm{Var}(Y_j)$ scales as $SM$ and admits the usual decomposition
$\mathrm{Var}(Y_j)=2\,V_{\rm sig}(j)+v_N$ (up to model constants), yielding
the explicit upper bound
\[
\mathrm{Var}(r_c)\;\le\; \frac{K-1}{K}\cdot \frac{1}{SM\,\rho^2}\,\max_j\!\big(2\,V_{\rm sig}(j)+v_N\big).
\]
In the balanced case $\mathrm{Var}(Y_j)\equiv V$, the identity simplifies to
$\mathrm{Var}(r_c)=\frac{K-1}{K}\cdot \frac{V}{SM\,\rho^2}$.
\end{theorem}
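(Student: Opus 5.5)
The plan is to prove unbiasedness from the per-class mean formula $\mathbb{E}[Y_c]=SM(m_c+\sigma_N^2)$ derived after \eqref{eq:energy-agg}, and the variance claims from an exact second-moment computation for each $Y_j$ followed by the centering identity that independence across classes (A4) makes available.

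\emph{Unbiasedness.} By linearity, $\mathbb{E}[\bar Y]=SM(\frac1K\sum_j m_j+\sigma_N^2)$. The noise term $SM\sigma_N^2$ is the same for every class, so it cancels in $Y_c-\bar Y$; this cancellation is the ``self-centering'' step. Under (A3), $m_j=\sum_i\beta_i\eta_i q_{i,j}=\rho\sum_i\omega_i q_{i,j}=\rho\bar q_j$. Since each $q_i$ lies on the simplex and $\sum_i\omega_i=1$, we get $\sum_j m_j=\rho$. Hence $\mathbb{E}[Y_c-\bar Y]=SM\rho(\bar q_c-\frac1K)$. Multiplying by $a=1/(SM\rho)$ and adding $1/K$ gives $\mathbb{E}[r_c]=\bar q_c$.

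\emph{Variance of a single $Y_j$.} Conditionally on the phases, the faded sum $z=\sum_i h_{i,s,m}\sqrt{E_{i,j}}e^{j\phi}$ in \eqref{eq:rx-sample} is $\mathcal{CN}(0,m_j)$, because the $h_{i,s,m}$ are independent circular Gaussians by (A1). Its energy $|z|^2$ is therefore exponential with variance $m_j^2$. The noise $n$ is independent of $z$, so
\[
|y|^2=|z|^2+|n|^2+2\,\mathrm{Re}(z\bar n).
\]
The cross term has zero mean and is uncorrelated with the two energy terms, and its variance is $2m_j\sigma_N^2$. Summing over the $SM$ independent $(s,m)$ pairs gives
\[
\mathrm{Var}(Y_j)=SM\bigl(m_j^2+2m_j\sigma_N^2+v_N\bigr).
\]
Bounding $m_j^2=(\sum_i\beta_iE_{i,j})^2$ against $V_{\rm sig}(j)=\sum_i\beta_i^2E_{i,j}^2$ is \textbf{the main obstacle}. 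Cauchy--Schwarz gives only a factor of $N$, not $2$. I would therefore first try to read the statement's ``up to model constants'' decomposition $\mathrm{Var}(Y_j)\lesssim SM(2V_{\rm sig}(j)+v_N)$ as the per-device-independent-energy model. That model is what one obtains if the cross-device interference terms are treated as averaging out under the random phases, and then Isserlis' theorem gives $\mathbb{E}|h|^4=2\beta_i^2$ for each device. If that reading fails, I would state the bound with $m_j^2$ in place of $V_{\rm sig}$, and absorb the mixed term $2m_j\sigma_N^2$ into $v_N$ via $2ab\le a^2+b^2$ at the cost of the factor $2$ already present in \eqref{eq:variance-bound}.

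\emph{Variance of $r_c$.} By (A4) the $Y_j$ are independent across classes. Writing
\[
Y_c-\bar Y=\Bigl(1-\tfrac1K\Bigr)Y_c-\tfrac1K\sum_{j\ne c}Y_j,
\]
we get
\[
\mathrm{Var}(Y_c-\bar Y)=\Bigl(1-\tfrac1K\Bigr)^2\mathrm{Var}(Y_c)+\tfrac1{K^2}\sum_{j\ne c}\mathrm{Var}(Y_j)\le \tfrac{K-1}{K}\max_j\mathrm{Var}(Y_j).
\]
In the balanced case this becomes an equality, $\frac{K-1}{K}V$. Multiplying by $a^2=1/(SM\rho)^2$ and inserting $\mathrm{Var}(Y_j)=SM(\cdot)$ gives the $\frac{K-1}{K}\frac{1}{SM\rho^2}\max_j(2V_{\rm sig}(j)+v_N)$ bound. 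Since $\frac{K-1}{K}\le1$ and the per-class term can be bounded by the factor-$2$ expression, \eqref{eq:variance-bound} follows. Finally, substituting $\beta_iE_{i,c}=\rho\omega_iq_{i,c}$ converts $V_{\rm sig}(c)/\rho^2$ into $\sum_i\omega_i^2q_{i,c}^2$. The $\Theta(1/(SM))$ claim then follows because all remaining constants are independent of $S$ and $M$ and are strictly positive whenever $v_N>0$.
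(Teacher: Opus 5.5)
\textbf{Verdict: genuine gap, in your last step.} The bound \eqref{eq:variance-bound} does not follow from your argument, and it is in fact false. The paper's sketch fails at the same point.

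\textbf{What matches the paper.} Your unbiasedness argument is the paper's. So is your centering identity: its $(1-\tfrac2K)\mathrm{Var}(Y_c)+\tfrac1{K^2}\sum_j\mathrm{Var}(Y_j)$ is exactly your expansion. Your per-observation variance $m_j^2+2m_j\sigma_N^2+v_N$ is correct.

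\textbf{The gap.} The failing inference is ``since $\tfrac{K-1}{K}\le 1$ \dots, \eqref{eq:variance-bound} follows.'' The centering step controls $\mathrm{Var}(r_c)$ only through $\max_j\mathrm{Var}(Y_j)$, because $\mathrm{Var}(Y_c-\bar Y)$ contains $K^{-2}\sum_{j\ne c}\mathrm{Var}(Y_j)$. Nothing lets you replace that maximum by the class-$c$ quantity $V_{\rm sig}(c)+v_N$. The paper's sketch makes the same unsupported jump (``this yields \eqref{eq:variance-bound}'').

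\textbf{Counterexample.} Take $K=2$ and let every device put all its mass on class $2$, with Gaussian noise ($v_N=\sigma_N^4$). Then $V_{\rm sig}(1)=0$, so \eqref{eq:variance-bound} for $c=1$ asserts $\mathrm{Var}(r_1)\le 2\sigma_N^4/(SM\rho^2)$. But $Y_1$ is pure noise and $m_2=\rho$, so your own formula gives
\[
\mathrm{Var}(r_1)=\frac{\sigma_N^4+(\rho+\sigma_N^2)^2}{4SM\rho^2}.
\]
This exceeds the claim once $\rho>(\sqrt7-1)\sigma_N^2$. Structurally, $r_1+r_2\equiv 1$ forces $\mathrm{Var}(r_1)=\mathrm{Var}(r_2)$, so the bound for the empty class would have to absorb the signal fluctuations of the loaded one. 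Even in the paper's cross-term-free model, $\mathrm{Var}(r_1)=(\rho^2\sum_i\omega_i^2+2\sigma_N^4)/(4SM\rho^2)$. That violates the bound as soon as $\rho^2\sum_i\omega_i^2>6\sigma_N^4$.

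\textbf{What your computation does prove.} It gives the exact expression
\[
\mathrm{Var}(r_c)=\frac{1}{SM\rho^2}\Big[\big(1-\tfrac1K\big)^2u_c+\tfrac1{K^2}\sum\nolimits_{j\ne c}u_j\Big],\qquad u_j=m_j^2+2m_j\sigma_N^2+v_N,\quad m_j=\rho\,\bar q_j .
\]
From this you get:
\begin{itemize}
\item the $\tfrac{K-1}{K}\max_j u_j$ bound;
\item the balanced-case identity, with $V$ read as the per-observation variance $u$, since $\mathrm{Var}(Y_j)=SM\,u_j$;
\item a genuine $\Theta(1/(SM))$. The paper's ``hence $\Theta$'' comes from an upper bound alone and so yields only $O$.
\end{itemize}
Any correct version of \eqref{eq:variance-bound} must carry $\max_j$ rather than class $c$ alone.

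\textbf{On the per-class term, your diagnosis is right.} Your ``first reading'' is exactly the paper's move: it writes $Y_c=\sum_{m,s}\sum_iE_{i,c}|h_{i,m,s}|^2+N_{m,s,c}$. This is not a consequence of (A1)--(A4). Random phases remove the inter-device cross terms from the mean but not from the variance. There they contribute $\sum_{i\ne k}\beta_i\beta_kE_{i,j}E_{k,j}=m_j^2-V_{\rm sig}(j)$. For instance, with $N\ge 3$ identical devices and uniform soft labels, $m_j^2=N\,V_{\rm sig}(j)$, so even the $\max_j(2V_{\rm sig}(j)+v_N)$ bound fails in that balanced case.

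Only your option (b) follows from the stated hypotheses. Its absorption step uses $\sigma_N^4\le v_N$. That holds for the Gaussian noise of \eqref{eq:rx-sample}, but it is not implied by (A2) alone.
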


\subsection{Scalability and Resource Requirements}\label{sec:scalability}
Our baseline maps $K$ classes to $K$ orthogonal REs per transmission. This is efficient for moderate $K$ (e.g., $10\!\le\!K\!\le\!100$) but burdensome for very large $K$.
Two standard compressions integrate cleanly with SCENE:

\emph{Top-$T$ logits.}
Let $T\ll K$ and $q_i^{(T)}$ retain the $T$ largest entries of $q_i$ followed by renormalization. Denote the tail mass $\delta_i \!=\! 1-\!\sum_{c\in\mathcal{T}_i} q_{i,c}$ and $\bar\delta=\sum_i \omega_i \delta_i$.
Then the aggregation bias $b=\mathbb{E}[q^{(T)}]-\bar q$ satisfies $\|b\|_1 \le 2\bar\delta$ and hence $\|b\|_2 \le \sqrt{2\bar\delta}$, while the noncoherent variance term scales as $O\!\big(\tfrac{1}{SM}\big)$ with $K$ replaced by $T$ in constants.

\emph{Hashed binning (Count-Sketch).}
Choose $B<T\ll K$ buckets and $H$ hash functions; transmit $HB$ REs and reconstruct an unbiased estimator of $\bar q$ with collision noise that vanishes as $B,H$ grow. This preserves unbiasedness and yields $O\!\big(\tfrac{1}{SM}\big)$ variance with constants depending on $(B,H)$.

A full exploration is orthogonal to our focus; we use full $K$ in experiments but note SCENE’s primitive extends directly with resource cost $\propto T$ (or $HB$) rather than $K$.

\begin{proof}[Proof sketch]
Unbiasedness follows from $\mathbb{E}[Y_c]=SM\sum_i \beta_i E_{i,c}+SM\sigma_N^2 = SM\rho\,\bar q_c + SM\sigma_N^2$ and $\mathbb{E}[\bar Y]=SM\sigma_N^2 + \frac{SM\rho}{K}$.
Substituting $a=\frac{1}{SM\rho}$ in~\eqref{eq:self-centering-estimator} cancels the noise offset and re-centers the mean to $\bar q_c$.
For variance, write $Y_c=\sum_{m,s} \sum_i E_{i,c}|h_{i,m,s}|^2 + N_{m,s,c}$ and use independence to obtain
$\mathrm{Var}(Y_c)=SM\big(\sum_i E_{i,c}^2 \mathrm{Var}(|h_{i}|^2) + \mathrm{Var}(N)\big)=SM\big(V_{\rm sig}(c)+v_N\big)$
since $\mathrm{Var}(|h_i|^2)=\beta_i^2$ for Rayleigh.
Because $r_c$ uses $Y_c-\bar Y$ and REs are independent across classes, the exact centering identity gives
$\mathrm{Var}(Y_c-\bar Y)=(1-\tfrac{2}{K})\mathrm{Var}(Y_c)+\tfrac{1}{K^2}\sum_{j=1}^K \mathrm{Var}(Y_j)$.
With $a=\frac{1}{SM\,\rho}$ this yields~\eqref{eq:variance-bound}.
Finally, substituting $\eta_i=\rho\,\omega_i/\beta_i$ turns $V_{\rm sig}(c)$ into $\rho^2\sum_i \omega_i^2 q_{i,c}^2$,
and inserting this into $Y_c$ together with the centering step produces the stated self-centering form in~\eqref{eq:self-centering-estimator}.
\end{proof}

Then $\mathbb{E}[r_c]=\bar q_c$ and $\sum_c r_c=1$.
Ignoring $\mathrm{Cov}(Y_c,\bar Y)$ yields $\mathrm{Var}[r_c]=\Theta(1/(SM))$.


\begin{algorithm}[H]
\caption{SCENE Aggregation for OTA--FD}
\begin{algorithmic}[1]
\STATE \textbf{Each device $i$:} compute $\mathbf q_i^{(k)}\in\Delta^{K-1}$; set $\eta_i^{(k)}=\rho\,\omega_i^{(k)}/\beta_i$.
\STATE Transmit energies $E_{i,c}^{(k)}=\eta_i^{(k)}\,q_{i,c}^{(k)}$ over $K$ orthogonal REs, for $S$ repetitions.
\STATE \textbf{PS:} measure total energies $Y_c^{(k)}$ over all reps and $M$ antennas.
\STATE Form the estimate $\widehat{\mathbf q}^{(k)}$ using the self-centering formula in Eq.~(\ref{eq:self-centering-estimator}).
\STATE Broadcast $\widehat{\mathbf q}^{(k)}$;
devices compute local FD loss.
\end{algorithmic}
\end{algorithm}
\subsection{Ratio-Normalized SCENE (pilot-free $\beta$ cancellation)}\label{subsec:ratio-normalization}
Add a single \emph{reference} RE per repetition with $E_{i,\mathrm{ref}}=\eta_i$, and let $R$ be its received energy.
Define
\begin{equation}\label{eq:ratio-estimator}
\tilde q_c \;=\; \frac{Y_c}{R},\qquad \hat q_c \;=\; \frac{\max\{\tilde q_c,0\}}{\sum_{d=1}^K \max\{\tilde q_d,0\}}.
\end{equation}
Ignoring thermal noise, $\mathbb{E}[\tilde q_c]=\frac{\sum_i \omega_i\gamma_i q_{i,c}}{\bar\gamma}$, which cancels the \emph{common} scale error.
With noise, a first-order delta-method expansion around $(\mathbb{E}[Y_c],\mathbb{E}[R])$ gives
$\mathbb{E}[\tilde q_c]=\frac{\sum_i \omega_i\gamma_i q_{i,c}}{\bar\gamma} + O\!\big(\frac{1}{SM}\big)$ and
$\mathrm{Var}(\tilde q_c)=\Theta\!\big(\frac{1}{SM}\big)$.
Thus ratio-normalization removes uniform drift but \emph{does not} correct heterogeneous $\gamma_i$ reweighting in~\eqref{eq:bias-formula}.

\section{Bias-Induced Drift and Convergence}

\label{sec:bias-variance}
\subsection{Bias-Induced Drift in Federated Learning}
\begin{proposition}[Bias under large-scale mismatch]\label{prop:beta-mismatch}
Suppose devices use $\hat\beta_i$ and $\eta_i=\rho\,\omega_i/\hat\beta_i$, and define $\gamma_i\triangleq \beta_i/\hat\beta_i$ and $\bar\gamma=\sum_i \omega_i\gamma_i$.
Then the self-centered estimator satisfies
\begin{equation}\label{eq:bias-formula}
\mathbb{E}[r_c]-\bar q_c \;=\; \sum_i \omega_i(\gamma_i-1)\Big(q_{i,c}-\tfrac{1}{K}\Big).
\end{equation}
In particular, if $|\gamma_i-1|\le \delta$ for all $i$, then
$\|\mathbb{E}[r]-\bar q\|_2 \le \delta \sqrt{\tfrac{K-1}{K}}$.
\end{proposition}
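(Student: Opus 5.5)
The plan is to redo the mean computation behind Theorem~\ref{thm:unbiased-variance} with mismatched pre-equalization. Only first moments enter, so the variance part of the theorem is not needed. Once the mean of $r_c$ is in hand, the norm bound follows from a convexity argument on the simplex.

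\emph{Step 1 (mean of $Y_c$ under mismatch).} With $\eta_i=\rho\omega_i/\hat\beta_i$ we have $E_{i,c}=\rho\omega_i q_{i,c}/\hat\beta_i$. As in the measurement model, cross terms in $|y_{c,s,m}|^2$ vanish in expectation because the phases are independent and uniform and the $h_{i,s,m}$ are independent across devices. Using $\mathbb{E}|h_{i,s,m}|^2=\beta_i$, this gives
\[
\mathbb{E}[Y_c]=SM\Big(\sum_i \beta_i E_{i,c}+\sigma_N^2\Big)=SM\Big(\rho\sum_i\omega_i\gamma_i q_{i,c}+\sigma_N^2\Big).
\]
Averaging over $c$ and using $\sum_c q_{i,c}=1$ yields
\[
\mathbb{E}[\bar Y]=SM\Big(\tfrac{\rho}{K}\sum_i\omega_i\gamma_i+\sigma_N^2\Big).
\]

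\emph{Step 2 (self-centering and bias).} Since $r_c$ is affine in the $Y_j$ and $a=1/(SM\rho)$, the noise offset $SM\sigma_N^2$ cancels. This leaves
\[
\mathbb{E}[r_c]=\sum_i\omega_i\gamma_i\Big(q_{i,c}-\tfrac1K\Big)+\tfrac1K.
\]
Using $\sum_i\omega_i=1$, write $\bar q_c=\sum_i\omega_i\big(q_{i,c}-\tfrac1K\big)+\tfrac1K$. Subtracting gives exactly \eqref{eq:bias-formula}.

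\emph{Step 3 (norm bound).} Let $u=\tfrac1K\bm 1$. Then the bias vector is $b=\sum_i\omega_i(\gamma_i-1)(q_i-u)$. By the triangle inequality, $|\gamma_i-1|\le\delta$, and $\sum_i\omega_i=1$, we get $\|b\|_2\le\delta\max_i\|q_i-u\|_2$. Because $\bm 1^\top q_i=1$, we have $\|q_i-u\|_2^2=\|q_i\|_2^2-\tfrac1K$. Moreover $\|q\|_2^2\le\|q\|_1\max_c q_c\le 1$ on the simplex, with equality at the vertices. Hence $\|q_i-u\|_2\le\sqrt{(K-1)/K}$, which gives the claim.

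I expect the only step needing care to be Step 3. The issue is to bound the norm through a weighted average of the individual deviations $\|q_i-u\|_2$ rather than through $\|\bar q-u\|_2$, since the weights $\omega_i(\gamma_i-1)$ can have mixed signs. The triangle inequality handles this. The remaining steps are direct substitutions into the computation already used in the proof sketch of Theorem~\ref{thm:unbiased-variance}.
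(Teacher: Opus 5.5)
Your proposal is correct and follows the paper's own proof essentially step for step. You compute $\mathbb{E}[Y_c]$ and $\mathbb{E}[\bar Y]$ under mismatch, substitute into the affine estimator so the $SM\sigma_N^2$ offset cancels, subtract $\bar q_c$, and bound the bias vector term by term using the maximal distance $\sqrt{(K-1)/K}$ between a simplex point and $\tfrac1K\bm 1$; your identity $\|q_i-\tfrac1K\bm 1\|_2^2=\|q_i\|_2^2-\tfrac1K\le 1-\tfrac1K$ supplies the justification that the paper only asserts.
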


\begin{proof}
With mismatch, $\mathbb{E}[Y_c]=SM\rho \sum_i \omega_i\gamma_i q_{i,c}+SM\sigma_N^2$ and $\mathbb{E}[\bar Y]=SM\rho \bar\gamma/K+SM\sigma_N^2$.
Plugging into~\eqref{eq:self-centering-estimator} yields $\mathbb{E}[r_c]=\sum_i \omega_i\gamma_i q_{i,c} + (1-\bar\gamma)/K$.
Subtract $\bar q_c=\sum_i \omega_i q_{i,c}$ to obtain~\eqref{eq:bias-formula}.
The $\ell_2$ bound follows from Cauchy–Schwarz and that $\sum_c (q_{i,c}-1/K)=0$ with maximal norm $\sqrt{(K-1)/K}$ on the simplex.
\end{proof}

Our design philosophy deliberately prioritizes unbiasedness over minimum single-round MSE.
This choice is motivated by the unique demands of iterative learning algorithms like FD.

\begin{lemma}[Bias-Induced Fixed-Point Shift]
Consider an FD update step driven by a loss function that penalizes the KL-divergence between the model's output and the global average $\mathbf{q}^\star$.
If the aggregator consistently returns a biased estimate $\hat{\mathbf{q}} = \mathbf{q}^\star + \mathbf{b}$ (where $\mathbf{b}$ is a persistent bias vector), the stationary point of the learning process will be shifted.
The final learned model $\theta^*$ will converge to a solution that satisfies $\nabla\mathcal{L}(\theta^*) + \gamma \mathbf{J}(\theta^*) \mathbf{b} = 0$, representing an $O(\|\mathbf{b}\|)$ deviation from the true optimum.
\end{lemma}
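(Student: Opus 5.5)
I would formalize the FD update as gradient descent on a composite objective $\mathcal{L}(\theta)+\gamma\,\mathrm{KL}\big(\hat{\mathbf q}\,\|\,p_\theta\big)$. Here $\mathcal{L}$ is the local (supervised) loss, $p_\theta=\mathrm{softmax}(z_\theta)$ is the model's soft output on the public batch, and $\gamma>0$ is the distillation weight. I read the statement's $\nabla\mathcal{L}(\theta^*)$ as the gradient of the \emph{unbiased} composite objective, i.e.\ with $\hat{\mathbf q}$ replaced by $\mathbf q^\star$. The first step is the standard softmax--cross-entropy identity
\[
\nabla_\theta\,\mathrm{KL}(\mathbf q\,\|\,p_\theta)=\mathbf J_z(\theta)^\top\big(p_\theta-\mathbf q\big),
\]
where $\mathbf J_z$ is the logit Jacobian. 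The key structural fact is that this gradient is \emph{affine} in the target $\mathbf q$. Writing $\hat{\mathbf q}=\mathbf q^\star+\mathbf b$, the biased objective's gradient therefore splits exactly as the unbiased gradient plus $-\gamma\,\mathbf J_z(\theta)^\top\mathbf b$. Setting $\mathbf J(\theta)\triangleq-\mathbf J_z(\theta)^\top$ and writing the stationarity condition of the biased dynamics gives $\nabla\mathcal{L}(\theta^*)+\gamma\mathbf J(\theta^*)\mathbf b=0$.

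Since the estimator is random, I would justify this in expectation: the FD iterate uses $\hat{\mathbf q}$ whose mean is $\mathbf q^\star+\mathbf b$, by Proposition~\ref{prop:beta-mismatch} with the bias of \eqref{eq:bias-formula}. Because the gradient is affine in $\hat{\mathbf q}$, the expected update equals the update with $\mathbb{E}[\hat{\mathbf q}]$. The zero-mean fluctuation, of order $O(1/(SM))$ by Theorem~\ref{thm:unbiased-variance}, only contributes noise and not a drift. Hence any stationary point of the mean dynamics (with diminishing step sizes) satisfies the displayed equation.

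For the $O(\|\mathbf b\|)$ deviation I would use the implicit function theorem. Define $F(\theta,\mathbf b)=\nabla\mathcal{L}(\theta)+\gamma\mathbf J(\theta)\mathbf b$. Let $\theta^\circ$ be a true optimum, so $F(\theta^\circ,0)=0$, and assume the Hessian $H=\nabla^2\mathcal{L}(\theta^\circ)$ is nonsingular, e.g.\ $\mu$-strongly convex locally. Then for small $\mathbf b$ there is a smooth branch $\theta^*(\mathbf b)$ with
\[
\theta^*-\theta^\circ=-\gamma H^{-1}\mathbf J(\theta^\circ)\mathbf b+O(\|\mathbf b\|^2),
\]
so $\|\theta^*-\theta^\circ\|\le(\gamma\|\mathbf J\|/\mu)\|\mathbf b\|+o(\|\mathbf b\|)$. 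Alternatively, a global version follows from strong monotonicity: $\mu\|\theta^*-\theta^\circ\|^2\le\langle\nabla\mathcal{L}(\theta^*)-\nabla\mathcal{L}(\theta^\circ),\theta^*-\theta^\circ\rangle=-\gamma\langle\mathbf J\mathbf b,\theta^*-\theta^\circ\rangle$. Cauchy--Schwarz then yields $\|\theta^*-\theta^\circ\|\le\gamma\sup_\theta\|\mathbf J(\theta)\|\,\|\mathbf b\|/\mu$.

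The main obstacle is that the lemma is stated informally. The deviation bound needs regularity assumptions not given in the statement: bounded logit Jacobian, and local strong convexity or an invertible Hessian at the optimum. Deep networks satisfy neither globally. I would state these as explicit hypotheses and present the result as local, following the implicit-function route first. I would also note that the stationarity identity itself needs no such assumptions, only the affine dependence of the KL gradient on the target.
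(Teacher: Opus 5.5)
Your argument is correct and is essentially the paper's own reasoning made rigorous. The paper gives no formal proof of this lemma, only the later first-order heuristic $\Delta\theta\approx-\eta\lambda(\nabla^2F)^{-1}Gb$, which is your implicit-function branch $\theta^*-\theta^\circ\approx-\gamma H^{-1}\mathbf J(\theta^\circ)\mathbf b$; your version, without a step-size factor, is the right one for a fixed point, and your affine-in-the-target decomposition supplies the stationarity identity the paper only asserts. The one hypothesis you use silently is $\mathbf 1^\top\hat{\mathbf q}=1$, which is needed for $\nabla_z\mathrm{KL}(\mathbf q\,\|\,p_\theta)=p_\theta-\mathbf q$; it holds here because $\sum_c r_c=1$ identically and the bias in \eqref{eq:bias-formula} sums to zero.
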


An unbiased aggregator, by contrast, ensures that any error in a given round is zero-mean noise ($\mathbb{E}[\boldsymbol{\varepsilon}]=\boldsymbol{0}$), which can be averaged out over the course of training.
It does not introduce a systematic drift.

\paragraph{Pathloss mismatch and a ratio-estimator variant.}
If $\beta_i$ mismatch introduces scaling errors, periodic coarse calibration bounds the bias.
Alternatively, a pilot-free ratio estimator cancels unknown large-scale factors: let $Y_c^{(k)}$ be received energy for class $k$ on subchannel $c$, then
$
\tilde q_c^{(k)} = \frac{Y_c^{(k)}}{\sum_{d=1}^K Y_d^{(k)}},
$
which cancels common $\beta_i$ factors;
its bias is second-order in noise via a delta-method correction.
\subsection{Preservation of Dark Knowledge}
The bias in MSE-optimal estimators often manifests as a "shrinkage" effect, where the estimate is pulled towards the origin to dampen noise.
This is particularly harmful for FD because it can systematically destroy \textbf{dark knowledge}.
The small, non-maximal probabilities in a soft-label vector represent crucial information about class similarity.
These low-energy signals are the most vulnerable to being mistaken for noise and suppressed by a shrinkage estimator.
Our unbiased, noncoherent scheme does not perform such filtering. It faithfully transmits a noisy but complete version of the soft labels, preserving the faint-but-vital dark knowledge in expectation and trusting the learning algorithm to average out the zero-mean noise over time.
\section{Comparative MSE and Crossover Analysis}
Let $T_{\mathrm{coh}}$ denote per-round pilot/feedback overhead (in REs) for coherent OTA--FD,
and let the total round budget be $B$ REs.
A fair comparison holds $B$ fixed.
\begin{itemize}[leftmargin=*]
\item \textbf{Coherent OTA--FD:} usable aggregation REs $\approx B-T_{\mathrm{coh}}$;
aggregation MSE scales as $O\!\big(1/M\big)$ per RE but must be amortized over fewer repetitions.
\item \textbf{SCENE:} usable aggregation REs $\approx B$ with $S \approx B/K$ repetitions;
aggregation MSE $=O(1/(SM))$ without CSI overhead.
\end{itemize}
\noindent A crossover occurs when $T_{\mathrm{coh}} \gtrsim S$, i.e., under short coherence or large $M$,
SCENE invests more wall-clock in averaging rather than pilots.
For large $K$, top-$T$ logits or low-dimensional projections
reduce REs while introducing a controllable modeling bias that is reflected in our convergence term $O(\lambda^2\|b\|^2)$.
A persistent bias $b$ shifts the stationary point by $\Delta\theta\!\approx\!-\eta\lambda(\nabla^2F)^{-1}G b$, magnitude $O(\lambda\|b\|)$. 
SCENE’s zero bias thus preserves dark knowledge.
Under $L$-smoothness and step $\eta_t=\eta_0/\sqrt{t+1}$,
\begin{equation}
\min_{t<T}\mathbb{E}\|\nabla F(\theta_t)\|^2
=O(1/\sqrt{T})+O(\lambda^2\|b\|^2)+O(\lambda^2/(SM)).
\end{equation}

\section{Practical Considerations}
\medskip\noindent\textbf{Correlation-aware variance.}
If repetitions (time) have autocorrelation $\rho_t(\tau)$ and antennas have spatial correlation $\rho_s(d)$,
then the effective sample sizes
\[
S_{\rm eff} \;\triangleq\; \frac{S}{1+2\sum_{\tau\ge 1}\rho_t(\tau)},\qquad
M_{\rm eff} \;\triangleq\; \frac{M}{1+2\sum_{d\ge 1}\rho_s(d)}
\]
yield the approximation $\mathrm{Var}(r_c)\approx \frac{c(\bar q,\mathrm{SNR})}{S_{\rm eff}M_{\rm eff}}$.
This follows by standard Bartlett–Newey variance inflation factors for weakly dependent sums.

\textbf{Timing/CFO tolerance.} Noncoherent aggregation still requires slot-level alignment so that energies add on the same RE.
Single-carrier constant-envelope waveforms with modest guard intervals reduce sensitivity to CFO/time offset while keeping PAPR low.
\noindent\textbf{Spatial/temporal correlation.} The $O(1/(SM))$ variance law assumes independence across repetitions and antennas.
With correlation coefficients $\rho_t$ (time) and $\rho_s$ (space), one can replace $SM$ by an effective $S_{\mathrm{eff}}M_{\mathrm{eff}}$
(e.g., $M_{\mathrm{eff}}=M(1-\rho_s)$, $S_{\mathrm{eff}}=S(1-\rho_t)$ for a first-order correction).
Frequency hopping across reps helps decorrelate.

\noindent\textbf{Dynamic range and quantization.} Peaky soft labels can cause tiny entries to be
noise/quantization-limited.
A small floor $\varepsilon$ or temperature smoothing mitigates dynamic-range issues at negligible accuracy cost.

\noindent\textbf{Robustness and privacy.} Self-centering plus post-adjustment (projection/renorm) gives natural defenses against outliers.
When needed, per-device energy caps and trimmed-mean on energies can bound the effect of adversarial scaling.
Small DP noise added \emph{before} energy mapping composes well with the noncoherent sum.

\noindent\textbf{Hardware Feasibility}
Beyond protocol overhead, our scheme offers a significant advantage in hardware feasibility that is critical for energy-constrained edge devices.
By design, the total transmit energy per device is constant ($\sum_c E_{i,c}^{(k)} = \eta_i^{(k)}$), allowing for the use of constant-envelope waveforms with a Peak-to-Average Power Ratio (PAPR) approaching unity.
This is highly favorable for hardware design, as it enables the use of power-efficient, non-linear power amplifiers (PAs) operating near saturation with minimal back-off.
In sharp contrast, coherent schemes, particularly those based on multicarrier modulation like OFDM, often exhibit high PAPR with crest factors exceeding 10 dB.
Such signals necessitate the use of expensive, linear PAs that require significant power back-off to avoid distortion, fundamentally reducing energy efficiency.
This hardware-friendliness is not a minor detail; it materially impacts the energy consumption per round and reinforces the practical superiority of our 'simplicity by design' approach in real-world edge settings.
\begin{table}[ht] 
\centering 
\caption{Overhead and Crossover}
\label{tab:overhead} 
\begin{tabular}{lccccc}
\toprule
Scheme & Pilot Cost & REs/round & Var.
& PAPR & CSI Need \\
\midrule
Coherent OTA--FD & High & $K+T_{\text{pilot}}$ & $O(1/M)$ & $>10$~dB & Yes \\
SCENE & None & $KS$ & $O(1/(SM))$ & $\approx 1$ & No \\
Ratio--SCENE & None & $K+1$ & $O(1/(SM))$ & $\approx 1$ & No \\
\bottomrule
\end{tabular}
\end{table}

\section{Coherent vs.\ Noncoherent: A Simple Crossover Model}
Assume a per-round RE budget $B$ and $K$ classes. Coherent OTA--FD requires $P$ REs for pilots/feedback/CSI acquisition, so $S_{\rm coh}=(B-P)/K$ usable repeats;
SCENE uses $S_{\rm nc}=B/K$ (no UL pilots). Suppose $c_{\rm coh}$ and $c_{\rm nc}$ are scheme-dependent MSE constants that absorb all non-asymptotic factors (e.g., estimator efficiency, noise enhancement, channel-estimation/beamforming errors for coherent OTA--FD, and phase-loss/energy-detection variance for noncoherent SCENE) beyond the common $\Theta(1/(MS))$ averaging scaling. Therefore, the round-MSEs scale as
$\frac{c_{\rm coh}}{M S_{\rm coh}}$ and $\frac{c_{\rm nc}}{M S_{\rm nc}}$, respectively.
Hence, SCENE has lower MSE iff
\begin{equation}
\frac{c_{\rm nc}}{B} \;\le\; \frac{c_{\rm coh}}{B-P}
\quad\Longleftrightarrow\quad
P \;\ge\; \Big(1 - \frac{c_{\rm coh}}{c_{\rm nc}}\Big) B.
\end{equation}
Empirically, $c_{\rm nc}>c_{\rm coh}$ because noncoherent detection discards phase, but when $P$ is a significant fraction of $B$ (e.g., short coherence/high mobility),
the increase in usable repeats can outweigh the constant gap. Indeed, SCENE buys itself extra averaging repetitions by not paying the pilot tax. The threshold
\[
P_{\text{threshold}}=\left(1-\frac{c_{\text{coh}}}{c_{\text{nc}}}\right)B
\]
is the pilot-tax level above which coherent loses because it wastes too much of the round just acquiring CSI.

\subsection*{SCENE in the Design Space of FL}
\emph{Versus digital FD.} State-of-the-art digital FD achieves excellent bit-efficiency via quantization and entropy coding. SCENE’s motivation is different: one-shot analog superposition yields ultra-low latency and device simplicity (constant power, constant envelope), which is attractive at large scale. Our goal is not to beat digital in bits per label, but to furnish a robust analog primitive when latency and hardware constraints dominate.

\emph{Versus hybrid coherent schemes.} A middle ground is possible: infrequent pilots to capture partial coherent gain. This can reduce MSE constants but reintroduces broadcast pilot mechanisms, client-side phase tracking, and sensitivity to oscillator drift. Residual phase/amplitude errors may elevate the noise floor and, under persistent miscalibration, introduce bias. SCENE deliberately avoids this complexity: it is unbiased by construction, insensitive to fast phase variation, and competitive whenever the pilot fraction is non-negligible (short coherence or many users).

\section{Simulation Results}

We consider a federated distillation system with a single parameter server (PS) and several edge devices (clients). We evaluate on MNIST (60{,}000 train / 10{,}000 test, 10 classes). We fix the number of clients to $N=100$ and split the 60k training images into a \emph{shared unlabeled open set} of size $I_o$ and a \emph{labeled private set} of size $I_p$ (with $I_o + I_p \le 60{,}000$) \cite{itahara2020dsfl}. We fix the shared unlabeled open set to have size $I_o = 20{,}000$ and the labeled private set to have size $I_p = 40{,}000$ (so that $I_o + I_p = 60{,}000$) \cite{itahara2020dsfl}. For IID experiments, the private set is shuffled and evenly partitioned so that each client holds $I_p/N$ labeled examples. All reported experimental results are averaged over 20 independent trials.

\paragraph{Model and training.} We use the MNIST CNN (similar to \cite{itahara2020dsfl}) implemented in our codebase for both client and server models. The network consists of two $5{\times}5$ convolutional layers with 32 and 64 channels, respectively, each followed by ReLU and $2{\times}2$ max pooling. The resulting feature map is flattened and passed through two fully-connected layers: 512 units with ReLU, followed by a 10-way linear classifier. The softmax is applied to convert the outputs to probabilities. This architecture has $1{,}663{,}370$ trainable parameters. $U$ denotes the unlabeled budget, i.e., the number of unlabeled samples selected from the shared pool $I_o$ and used for one-shot distillation.

For the one shot protocol, clients first perform supervised pretraining for a fixed number of local epochs on labeled data using mini-batch SGD (batch size 32, learning rate 0.005, momentum 0.9, weight decay $5{\times}10^{-4}$). Next, the server runs a one-shot distillation stage: for each unlabeled budget $U$, clients provide class-probability predictions on the selected unlabeled subset, which are aggregated either over-the-air (OTA) or via a noise-free weighted average (Plain). The server then distills for a fixed number of epochs using unlabeled mini-batches of size 32 and evaluates the global model after distillation.
  Distillation is carried out primarily at the server by Kullback-Leibler divergence (KLD) as the loss function: once per round on the shared unlabeled batch, while clients do not perform additional local distillation epochs beyond their supervised updates. The client pretraining stage and the server distillation stage are both run for 20 epochs.

\paragraph{Wireless channel and path-loss.}
For the OTA experiments, we simulate a broadband uplink with noncoherent combining at the PS. Each client $i$ experiences a large-scale gain
\[
\beta_i = d_i^{-\alpha} 10^{X_i/10},
\]
where $d_i$ is the distance from client $i$ to the PS, drawn uniformly in the interval $[5,50]$ meters, $\alpha = 3.5$ is the path-loss exponent, and $X_i$ is a zero-mean Gaussian random variable with standard deviation $8$ dB modeling lognormal shadowing. The resulting gains are normalized such that their empirical mean equals one. Per-device power constraints are modeled by assigning each client a maximum transmit power uniformly drawn from a fixed interval $P_i \sim \mathcal{U}[0.5, 1.5].$ ; all devices employ the same mapping from soft-label probabilities to transmit energies, ensuring constant total transmit energy per device per round. Small-scale fading is modeled as independent Rayleigh fading across devices, antennas, and repetitions. The PS is equipped with $M$ antennas and each OTA burst is repeated $S$ times, so that each class-energy measurement is averaged over $SM$ independent observations.

The noise variance at the receiver is chosen to realize a prescribed target signal-to-noise ratio (SNR) per resource element. Given the average received signal energy implied by the soft-label energies and path-loss, the noise power is set so that the effective per-resource-element SNR equals either $5$ dB or $10$ dB, depending on the scenario. Clients training and test accuracies are 100 percent and 92.18 percent, respectively. Clients train/test accuracies are reported before the OTA distillation stage (they are the local pretraining accuracies). Indeed, SNR affects the OTA aggregation/distillation performance, not the clients’ supervised pretraining metrics—so the client averages remain the same across SNR sweeps.

In our experiments, the \emph{Plain} baseline aggregates client predictions by a noise-free, data-size–weighted arithmetic mean of the class-probability vectors on the simplex, i.e., it reflects ideal server-side averaging without OTA impairments.

\paragraph{SCENE aggregation and operating points.}
Devices map their soft-label vectors to nonnegative energies and transmit concurrently over $K$ orthogonal resource elements. The PS aggregates the received magnitudes squared over $S$ repetitions and $M$ antennas and applies the SCENE self-centering estimator, followed by a light projection back onto the probability simplex (clipping and renormalization). 
In all cases, the learning-rate schedule, batch sizes, number of local epochs, and distillation protocol are kept fixed; only the aggregation mechanism, SNR, and diversity parameters $(S,M)$ are varied.

\begin{figure}[t]
    \centering
    \includegraphics[width=\linewidth]{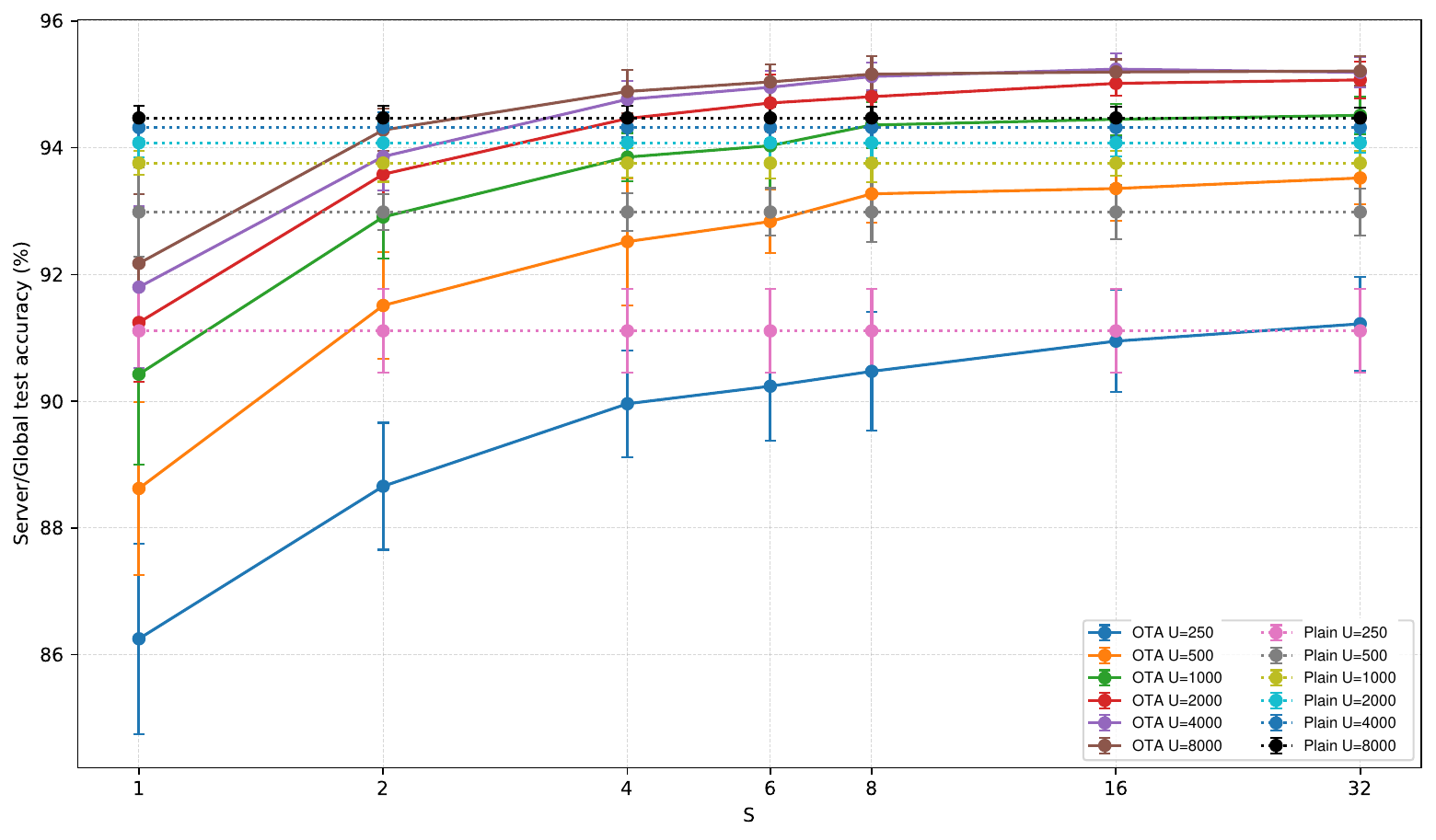} 
    \caption{S versus server/global test accuracy (percent) at SNR = 5 dB.}
    \label{fig:Server_S_5db}
\end{figure}

\begin{figure}[t]
    \centering
    \includegraphics[width=\linewidth]{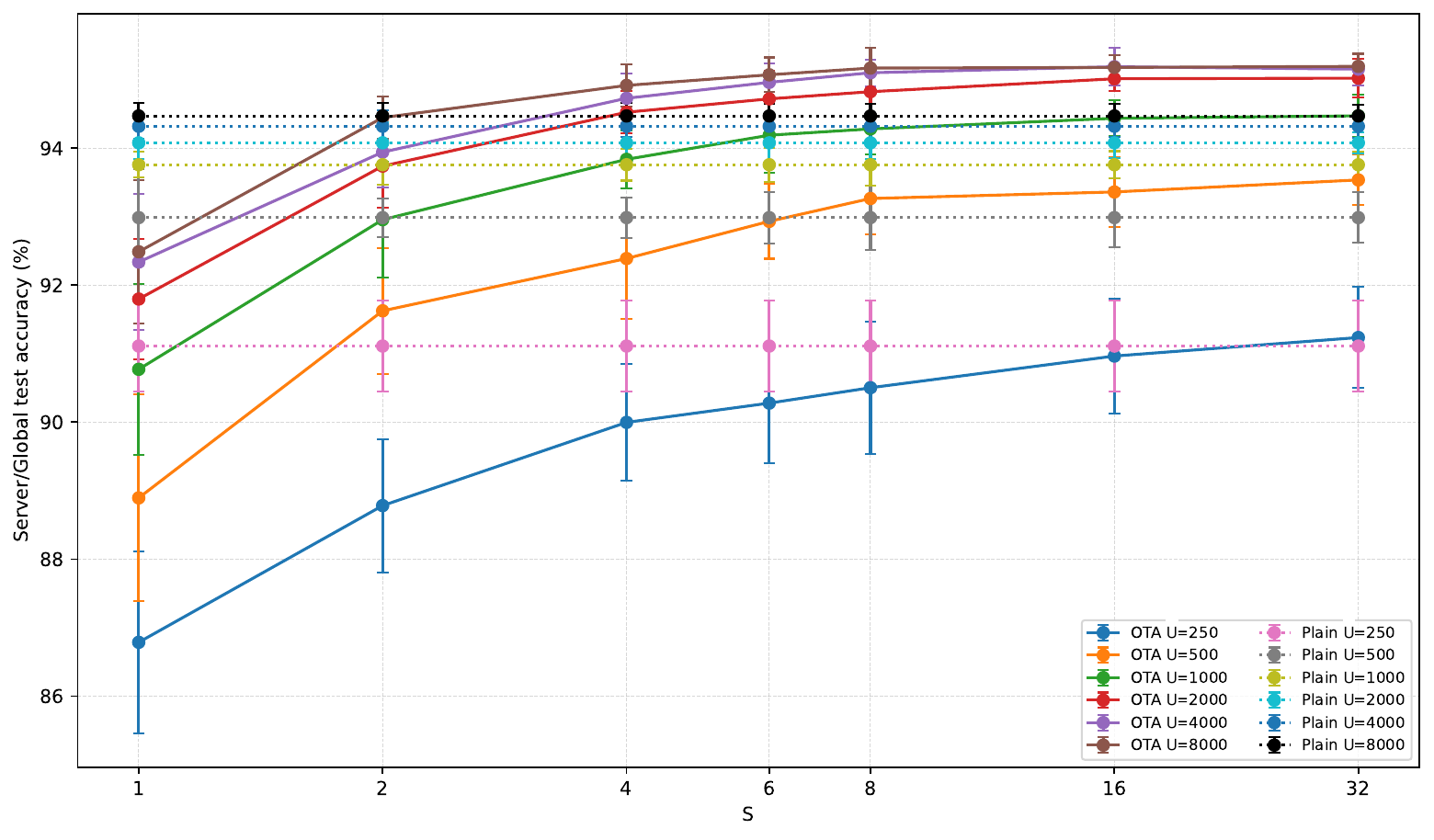} 
    \caption{S versus server/global test accuracy (percent) at SNR = 10 dB.}
    \label{fig:Server_S_10db}
\end{figure}

Indeed, the server forms soft targets for distillation by aggregating client-predicted class probabilities over the wireless channel. For OTA aggregation, each unlabeled sample can be transmitted multiple times, i.e., repeated $S$ times, which effectively averages channel impairments across repetitions. Consequently, considering $M = 1$, the total OTA communication/airtime cost scales with the product of the unlabeled batch size and the repetition factor, $B \propto U S$. Under a fixed budget $B$, there is an inherent trade-off: increasing the unlabeled batch size $U$ improves coverage/diversity for knowledge transfer, whereas increasing the repetition factor $S$ improves the reliability of the aggregated soft labels by reducing OTA-induced distortion.

Empirically, as shown in Figures \ref{fig:Server_S_5db} and \ref{fig:Server_S_10db} this trade-off produces a clear ``sweet spot'' in $(U,S)$: allocating some budget to repetition can be strictly more effective than spending the entire budget on sending a larger unlabeled batch only once. Intuitively, when $S$ is too small, the OTA aggregation noise degrades the soft targets and limits distillation quality even if many unlabeled samples are used. In contrast, a moderate number of repetitions substantially stabilizes the received soft labels and yields a marked accuracy improvement, often outperforming the ``send-once'' strategy at the same total budget. Beyond this regime, however, excessive repetition exhibits diminishing returns because the reduced $U$ eventually limits the diversity of unlabeled data, leading to saturation and, in some cases, a decline in performance.

This behavior is consistent across channel conditions: at lower SNR, as in Figure \ref{fig:Server_S_5db}, the benefit of repetition is amplified (since OTA impairments are stronger), while at higher SNR, as in Figure \ref{fig:Server_S_10db}, the optimum shifts slightly but remains in the moderate-repetition regime. Overall, these results support the design principle that, for a fixed OTA budget, \emph{moderate repetition is typically preferable to a single transmission of a larger unlabeled batch}, indicating a robust sweet-spot allocation that balances unlabeled coverage and OTA reliability.

At this stage, we hypothesize that when OTA outperforms the ideal noise-free aggregation, it is because residual OTA perturbations can act as an implicit regularizer during distillation, improving the neural network’s generalization.

\paragraph{Why the $S$--$M$ curves nearly coincide when $SM$ is fixed.}
The close overlap of the OTA curves in Figure~\ref{fig:Diff_comb_5db} for different $(S,M)$ pairs with the same product $SM=16$ is consistent with the way OTA noise enters the aggregation in our implementation. In the code, $M$ (the number of receive antennas/branches) provides \emph{spatial averaging} of the OTA distortion, while $S$ (the number of repeated transmissions) provides \emph{temporal averaging}. Both mechanisms reduce the effective variance of the aggregation error through averaging across approximately independent realizations. As a result, for fixed total diversity order $SM$, the received aggregate used to generate distillation targets has a similar effective signal-to-distortion level, yielding comparable quality soft labels and hence similar server test accuracy as a function of $U$. Residual gaps at small $U$ can be attributed to finite-sample effects and non-idealities (e.g., mild dependence between repetitions/branches and nonlinearities from probability normalization), but overall the observed invariance indicates that, in this regime, performance is primarily governed by the \emph{total averaging budget} $SM$ rather than how it is split between repetitions and antennas.

By exchanging soft labels (class-probability vectors) on a shared unlabeled batch instead of full model parameters or gradients, federated distillation reduces the uplink payload from model-size $O(P)$ (with $P$ parameters) to label-size $O(K)$ (with $K$ classes), yielding substantial communication savings and a representation that is naturally compatible with over-the-air aggregation. Moreover, it enables heterogeneous model architectures across clients and the server, since aggregation operates on predicted probability vectors rather than model parameters.

\begin{figure}[t]
    \centering
    \includegraphics[width=\linewidth]{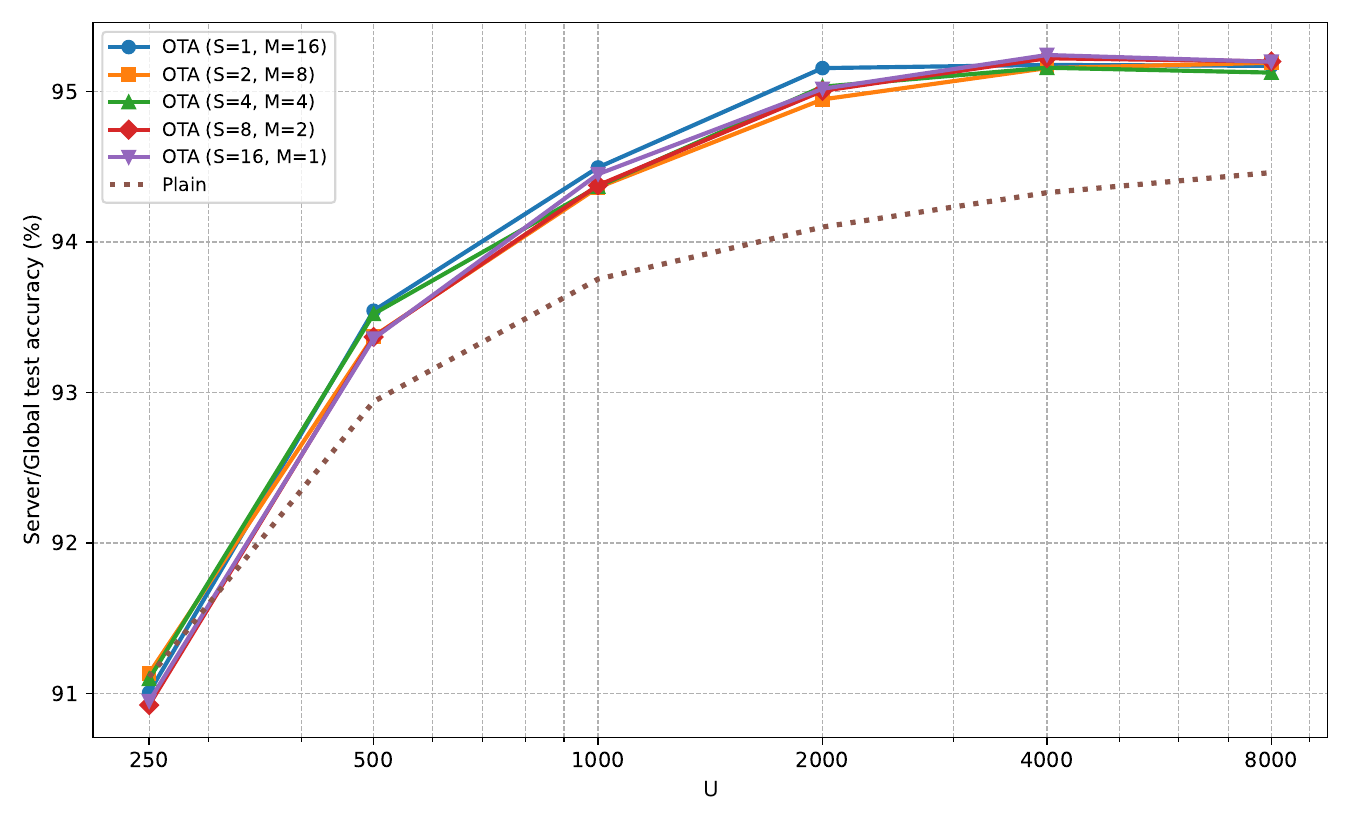} 
    \caption{different combinations of (S,M)={(1,16),(16,1),(2,8),(8,2),(4,4)} for 100 clients, one-shot, average of 20 trials at SNR = 5 dB.}
    \label{fig:Diff_comb_5db}
\end{figure}

\section{Conclusion}
We have proposed a simple, robust, and pilot-free noncoherent OTA aggregation scheme for Federated Distillation.
By mapping soft-label vectors directly to transmit energies, our uplink is invariant to channel phase and CFO;
combined with a practical \emph{self-centering} estimator, this yields a \emph{pilot-free} implementation.
The constant-envelope signaling (PAPR~$\approx 1$) further makes the approach \emph{hardware-friendly} for edge devices.
Crucially, we prioritize \emph{unbiasedness} over minimal single-round MSE. Whereas coherent LMMSE-style designs reduce variance via shrinkage—risking systematic attenuation of the probability tail and the loss of ``dark knowledge’’—our estimator preserves the target distribution \emph{in expectation}, aligning better with the needs of iterative learning and helping retain dark knowledge.
Our analysis frames a fundamental trade-off between CSI acquisition and bandwidth, identifying short-coherence/high-mobility and hardware-constrained regimes as key operating points where the proposed noncoherent design excels, delivering superior test accuracy versus wall-clock time despite a higher per-round MSE.
\bibliographystyle{IEEEtran}

\bibliography{nc_softlabel_refs}

\end{document}